\documentclass[runningheads, envcountsame, a4paper]{llncs}
\usepackage{amsmath}
\usepackage{amssymb}
\usepackage{theorem}
\usepackage{graphicx}
\usepackage{tikz}
\usepackage{tabularx}
\usepackage[noend,ruled,linesnumbered]{algorithm2e}

\def\pp{\mathinner{\ldotp\ldotp}}
\def\sa#1{\mbox{\tt #1}}

 \def\PV{\mathcal{P}}
 
 \def\PVW{\mathcal{P}_{\s{w}}}
 \def\QV{\mathcal{Q}} 
\def\undefined{|w|}

\def\pp{\ldotp\ldotp}
\renewcommand{\epsilon}{\varepsilon}

\def\s#1{\mbox{$#1$}}

\begin{document}

\title{Online Computation of Abelian Runs\thanks{Final version to be published in the Proceedings of LATA 2015.}}
\toctitle{Online Computation of Abelian Runs}

\author{Gabriele Fici\inst{1} \and Thierry Lecroq\inst{2} \and Arnaud Lefebvre\inst{2} \and \'Elise Prieur-Gaston\inst{2}}
\tocauthor{Gabriele~Fici}
\tocauthor{Thierry~Lecroq}
\tocauthor{Arnaud~Lefebvre}
\tocauthor{\'Elise~Prieur-Gaston}

\authorrunning{G. Fici, T. Lecroq, A. Lefebvre, and \'E. Prieur-Gaston}

\institute{
Dipartimento di Matematica e Informatica, Universit\`a di Palermo, Italy\\ \email{Gabriele.Fici@unipa.it}  \and
Normandie Universit\'e, LITIS EA4108, NormaStic CNRS FR 3638, IRIB, Universit\'e de Rouen, 76821 Mont-Saint-Aignan Cedex, France \\ \email{$\{$Thierry.Lecroq,Arnaud.Lefebvre,Elise.Prieur$\}$@univ-rouen.fr}}

\maketitle

\setcounter{footnote}{0} 

\begin{abstract}
Given a word $w$ and a Parikh vector $\PV$, an abelian run of period $\PV$ in $w$ is a maximal occurrence of a substring of $w$ having abelian period $\PV$. We give an algorithm that finds all the abelian runs of period $\PV$ in a word of length $n$ in time $O(n\times |\PV|)$ and space $O(\sigma+|\PV|)$.
\keywords{Combinatorics on Words, Text Algorithms, Abelian Period, Abelian Run}
\end{abstract}

\section{Introduction}

Computing maximal (non-extendable) repetitions in a string is a classical topic in the area of string algorithms (see for example \cite{Smy2013} and references therein). Detecting maximal repetitions of substrings, also called \emph{runs}, gives information on the repetitive regions of a string, and is used in many applications, for example in the analysis of genomic sequences.

Kolpakov and Kucherov \cite{KK99} gave a linear time algorithm for computing all the runs in a word and conjectured that any word of length $n$ contains less than $n$ runs. Bannai et al.~\cite{B14} recently proved this conjecture using the notion of Lyndon root of a run.

Here we deal with a generalization of this problem to the commutative setting. Recall that an abelian power is a concatenation of two or more words that have the same Parikh vector, i.e., that have the same number of occurrences of each letter of the alphabet. For example, $aababa$ is an abelian square, since $aab$ and $aba$ both have $2$ $a$'s and $1$ $b$. When an abelian power occurs within a string, one can search for its ``maximal'' occurrence by extending it to the left and to the right character by character without violating the condition on the number of occurrences of each letter. Following the approach of Constantinescu and Ilie~\cite{CI2006}, we say that a Parikh vector $\PV$ is an abelian period for a word $w$ over a finite ordered  alphabet $\Sigma=\{a_{1},a_{2},\ldots , a_{\sigma}\}$ if $w$ can be written as $w=u_0u_1 \cdots u_{k-1}u_k$ for some $k>2$ where for $0<i<k$ all
 the $u_i$'s have the same Parikh vector $\PV$ and the Parikh vectors of $u_0$ and $u_k$ are contained in $\PV$. Note that the factorization above is not necessarily unique. For example, $a\cdot bba\cdot bba \cdot \epsilon$ and $\epsilon \cdot abb\cdot abb \cdot a$ ($\epsilon$ denotes the empty word) are two factorizations of the word $abbabba$ both corresponding to the abelian period $(1,2)$. Moreover, the same word can have different abelian periods.
 
 In this paper we define an \emph{abelian run} of period $\PV$ in a word $w$ as an occurrence of a substring $v$ of $w$ such that $v$ has abelian period $\PV$ and this occurrence cannot be extended to the left nor to the right by one letter into a substring having the same abelian period $\PV$. 
 
For example, let $w=ababaaa$. Then the prefix $ab\cdot ab\cdot a=w[1\pp 5]$ has abelian period $(1,1)$ but it is not an abelian run since the prefix $a\cdot ba \cdot ba\cdot a=w[1\pp 6]$ has also abelian period $(1,1)$. This latter, instead, is an abelian run of period $(1,1)$ in $w$.
 
 Looking for abelian runs in a string can be useful to detect those regions in a string in which there is some kind of non-exact repetitiveness, for example  regions in which there are several consecutive occurrences of a substring or its reverse.
 
Matsuda et al.~\cite{matsuda2014computing} recently presented an offline algorithm for computing all abelian runs of a word of length $n$ in $O(n^2)$ time.
 Notice that, however, the definition of abelian run in~\cite{matsuda2014computing} is slightly different from the one we consider here. We will comment on this in Section \ref{sect-prev}.
 
We present an online algorithm that, given a word $w$ of length $n$ over an alphabet of cardinality $\sigma$, and a Parikh vector $\PV$, returns all the abelian runs of period $\PV$ in $w$ in time  $O(n\times |\PV|)$ and space $O(\sigma+|\PV|)$.

\section{Definitions and Notation}

Let $\Sigma=\{a_{1},a_{2},\ldots,a_{\sigma}\}$ be a finite ordered
 alphabet of cardinality $\sigma$ and let $\Sigma^*$ be the set of finite words
 over $\Sigma$. 
We let $|\s{w}|$ denote the length of the word $\s{w}$.
Given a word $\s{w}=\s{w}[0\pp n-1]$ of length $n>0$, we write $\s{w}[i]$ for the $(i+1)$-th symbol of $\s{w}$
 and, for $0\leqslant i \leqslant j< n$, we write $\s{w}[i\pp j]$ for the substring of $\s{w}$
 from the $(i+1)$-th symbol to the $(j+1)$-th symbol, both included.
We let $|\s{w}|_a$ denote the number of occurrences of the symbol
 $a\in\Sigma$ in the word $\s{w}$. 

The \emph{Parikh vector} of $\s{w}$, denoted by $\PVW$,
 counts the
 occurrences of each letter of $\Sigma$ in $\s{w}$, that is, 
 $\PVW=(|\s{w}|_{a_{1}},\ldots,|\s{w}|_{a_{\sigma}})$.
Notice that two words have the same Parikh vector if and only if
 one word is a permutation (i.e., an anagram) of the other.

Given the Parikh vector $\PVW$ of a word $\s{w}$, we let $\PVW [i]$ denote its
 $i$-th component and $|\PVW|$ its norm, defined as the sum of its components.
Thus, for $\s{w}\in\Sigma^*$ and $1\leqslant i\leqslant\sigma$, we have
 $\PVW [i]=|\s{w}|_{a_i}$ and $|\PVW|=\sum_{i=1}^{\sigma}\PVW[i]=|\s{w}|$.

Finally, given two Parikh vectors $\PV,\QV$, we write $\PV\subset \QV$ if
 $\PV[i]\leqslant \QV[i]$
 for every $1\leqslant i\leqslant \sigma$ and $|\PV|<|\QV|$. 
 
\begin{definition}[Abelian period \cite{CI2006}]
\label{def-ap}
A Parikh vector $\PV$ is an abelian period for a word $\s{w}$  if
 $\s{w}=\s{u}_0\s{u}_1 \cdots \s{u}_{k-1}\s{u}_{k}$, for some $k>2$, where
 $\PV_{\s{u}_{0}}\subset \PV_{\s{u}_{1}}=\cdots =\PV_{\s{u}_{k-1}}\supset \PV_{\s{u}_{k}}$,
 and $\PV_{\s{u}_{1}}=\PV$.

\end{definition}

Note that since the Parikh vector of $\s{u}_0$ and $\s{u}_k$ cannot be included
 in $\PV$ it implies that $|\s{u}_0|, |\s{u}_k|<|\PV|$.
We call $\s{u}_0$ and $\s{u}_k$ respectively the \emph{head} and the
 \emph{tail}  of the abelian period.
Note that in~\cite{CI2006} the abelian period is characterized by $|\s{u}_0|$ and $|\PV|$ thus we will sometimes use the notation $(h,p)$ for an abelian period of norm $p$ and head length $h$ of a word $w$.  Notice that the length $t$ of the tail is uniquely determined by $h$, $p$ and $n=|\s{w}|$, namely $t=(n-h) \bmod p$. 

\begin{definition}[Abelian repetition]
\label{def-arep}
A substring $w[i\pp j]$ is an abelian repetition with period length $p$ if $i-j+1$ is a multiple of $p$,
 $i-j+1 \ge 2p$
 and there exists a Parikh vector $\PV$ of norm $p$ such that $\PV_{\s{w}[i+kp\pp i+(k+1)p-1]}=\PV$ for every $0\le k \le p/(i-j+1)$.
\end{definition}

An abelian repetition $w[i\pp j]$ with period length $p$ such that
 $i-j+1 = 2p$ is called an abelian square.
An abelian repetition $w[i\pp j]$ of period length $p$ of a string $w$ is maximal if:
\begin{enumerate}
\item $\PV_{\s{w}[i−p\pp i−1]} \ne \PV_{\s{w}[i\pp i+p−1]}$ or $i − p < 0$;
\item $\PV_{\s{w}[j−p+1\pp j]} \ne \PV_{\s{w}[j+1\pp j+p]}$ or $j + p \ge n$.
\end{enumerate}

We now give the definition of an abelian run. Let $\s{v}=\s{w}[b\pp e]$, $0\le b \le e \le |\s{w}|-1$, be an occurrence of a substring in $\s{w}$ and suppose that $\s{v}$ has an abelian period $\PV$, with head length $h$ and tail length $t$. Then we denote this occurrence by the tuple $(b,h,t,e)$.

\begin{definition}
Let $\s{w}$ be a word. An occurrence $(b,h,t,e)$ of a substring of $\s{w}$ starting at position $b$, ending at position $e$, and having abelian period $\PV$ with head length $h$ and tail length $t$ is called  \textbf{left-maximal} (resp.~\textbf{right maximal}) if there does not exist an occurrence of a substring $(b-1,h',t',e)$ (resp.~$(b,h',t',e+1)$) with the same abelian period $\PV$. An occurrence $(b,h,t,e)$ is called \textbf{maximal} if it is both left-maximal and right-maximal.
\end{definition}

This definition leads to the one of abelian run.

\begin{definition}
An \textbf{abelian run} is a maximal occurrence $(b,h,t,e)$ of a substring with abelian period $\PV$ of norm $p$ such that  $(e-b-h-t+1) \ge 2 p$ (see \figurename~\ref{figu-def}).
\end{definition}

\begin{figure}
\begin{tikzpicture}[scale=0.8]

\draw (0,0) rectangle (2,.5);
\draw (2,0) rectangle (4,.5);
\draw (4,0) rectangle (7,.5);
\draw (7,0) rectangle (10,.5);
\draw (10,0) rectangle (13,.5);
\draw (13,0) rectangle (14,.5);
\draw (14,0) rectangle (15,.5);

\node at (2.25,.75) {$b$};
\node at (13.75,.75) {$e$};

\draw[<->] (2,-.5) -- (4,-.5) ;
\node at (3,-1) {$h$};

\draw[<->] (13,-.5) -- (14,-.5) ;
\node at (13.5,-1) {$t$};

\draw[<->] (4,-.5) -- (7,-.5) ;
\node at (5.5,-1) {$\PV$};

\draw[<->] (7,-.5) -- (10,-.5) ;
\node at (8.5,-1) {$\PV$};

\draw[<->] (10,-.5) -- (13,-.5) ;
\node at (11.5,-1) {$\PV$};

\draw[<->] (4,1) -- (13,1) ;
\node at (8.5,1.5) {$e-b-h-t+1$};

\end{tikzpicture} 

\caption{\label{figu-def}
The tuple $(b,h,t,e)$ denotes an occurrence of a substring starting at position $b$, ending at position $e$, and having abelian period $\PV$ with head length $h$ and tail length $t$.
}
\end{figure}
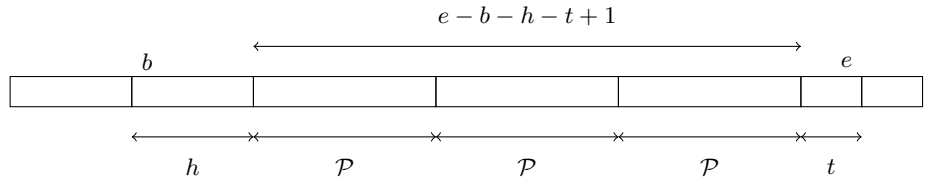

The next result limits the number of abelian runs starting at each position in a word.

\begin{lemma}\label{lem-run}
Let $\s{w}$ be a word. Given a Parikh vector $\PV$, there is at most one abelian run with abelian period $\PV$ starting at each position of $\s{w}$.
\end{lemma}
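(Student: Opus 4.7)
I would argue by contradiction. Suppose two distinct abelian runs $(b,h_1,t_1,e_1)$ and $(b,h_2,t_2,e_2)$ of period $\PV$ start at the same position $b$, and up to swapping assume $e_1 \le e_2$. The plan is to construct a valid occurrence of abelian period $\PV$ ending at position $e_1+1$, which would contradict the right-maximality of $(b,h_1,t_1,e_1)$ and hence force $e_1=e_2$, so that the two runs coincide as substring occurrences.

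The first (easy) step uses only the factorization of $(b,h_2,t_2,e_2)$. Since this occurrence has $k_2\ge 2$ full blocks, truncating its tail and, if necessary, dropping some trailing full blocks produces a valid occurrence with head $h_2$ and any chosen end in the interval $[b+h_2+2p-1,\, e_2]$: the resulting tail is always empty, or a prefix of a full block (which has Parikh vector $\PV$), or a prefix of the original tail, so in every case its Parikh vector is contained in $\PV$. If $e_1+1$ lies in this interval, we are done. Otherwise $e_1+1 < b+h_2+2p-1$, which together with the lower bound $e_1 \ge b+h_1+2p-1$ forces $d := h_2-h_1 \ge 2$ and furthermore tightens the first run to $k_1=2$ and $t_1 \le d-2$.

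In this hard case I would compare the two block partitions directly. Shifting the $h_1$-factorization by $d$ positions, each $h_2$-block is the concatenation of the last $p-d$ letters of some $h_1$-block with the first $d$ letters of the next. Because both $h_1$-blocks and $h_2$-blocks have Parikh vector $\PV$, a short telescoping shows that the first $d$ letters of every $h_1$-block in the overlap range share a common Parikh vector $\alpha$ with $\alpha[i] \le \PV[i]$ componentwise. Applied to the $h_1$-block that starts at $b+h_1+2p$, this bounds the Parikh vector of its prefix of length $t_1+1 \le d-1 < p$ componentwise by $\alpha$, hence by $\PV$, with strictly smaller norm; that prefix is therefore a valid tail, and $(b,h_1,t_1+1,e_1+1)$ is a valid occurrence of period $\PV$, again contradicting the right-maximality of $(b,h_1,t_1,e_1)$. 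I expect this hard case to be the main obstacle: one has to extract the structural identity relating two phase-shifted factorizations and check that the tight inequalities $k_1=2$ and $t_1\le d-2$ keep the extension by one character controlled by $\alpha$.
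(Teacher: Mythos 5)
Your proof is correct, and it is far more detailed than what the paper actually writes: the paper's entire proof is the single sentence ``if two abelian runs start at the same position, the one with the shortest head cannot be maximal.'' Both arguments share the high-level strategy of contradicting maximality of one of the two runs, but you key on the \emph{ending} position rather than the head length, and, crucially, you actually construct the witnessing extension. Your easy case (truncating the longer run's factorization to end at $e_1+1$, which works whenever $e_1+1\ge b+h_2+2|\PV|-1$) is presumably what the paper has in mind, but your hard case is genuinely new content: when the factorizations are phase-shifted by $d=h_2-h_1\ge 2$ and the shorter run has only $k_1=2$ blocks and tail $t_1\le d-2$, the truncation does not leave two full blocks, and one instead needs the telescoping identity $\alpha_{j+1}=\alpha_j$ relating the two shifted block partitions to show that $w[e_1+1]$ can be absorbed into the tail, yielding the occurrence $(b,h_1,t_1+1,e_1+1)$. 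I checked the arithmetic deriving the hard-case constraints and the telescoping step (which only needs $k_1,k_2\ge 2$, so the segment starting at $b+h_1+2p$ need not itself be a block of the first run); both are sound. The only cosmetic caveat is that your conclusion is $e_1=e_2$, i.e.\ uniqueness of the occurrence as an interval; two tuples $(b,h,t,e)$ with the same $b,e$ but different head/tail annotations are identified, which is the reading the paper's algorithm relies on. In short: your proof buys a complete, checkable argument where the paper offers only an assertion, at the cost of a case analysis the paper never acknowledges.
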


\begin{proof}
If two abelian runs start at the same position, the one
 with the shortest head cannot be maximal.
 \qed
\end{proof}

\begin{corollary}\label{coro-run}
Let $\s{w}$ be a word. Given a Parikh vector $\PV$, for every position $i$ in $\s{w}$ there are at most $|\PV|$ abelian runs with period $\PV$ overlapping at $i$.
\end{corollary}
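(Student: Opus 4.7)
The plan is to associate each abelian run $R = (b, h, t, e)$ of period $\PV$ that overlaps position $i$ with a distinguished position $\beta(R)$ in the window $[i - |\PV| + 1, i]$, and to show that the map $R \mapsto \beta(R)$ is injective. Since this window contains exactly $|\PV|$ positions, injectivity gives the claimed bound.

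I define $\beta(R)$ by cases on where $i$ sits inside $R$: if $i$ is in the head of $R$, set $\beta(R) = b$; if $i$ is in the tail, set $\beta(R) = e - t + 1$; and if $i$ is in the body, let $\beta(R)$ be the first position of the unique body block that contains $i$. In each case $\beta(R) \in [i - |\PV| + 1, i]$, because the head and the tail of an abelian run have length strictly less than $|\PV|$ while each body block has length exactly $|\PV|$.

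To prove injectivity I argue by contradiction: suppose $\beta(R_1) = \beta(R_2) = p$ for two distinct runs $R_1$ and $R_2$ overlapping $i$. The sub-case where $i$ is in the head of both runs forces $b_1 = b_2 = p$, directly contradicting Lemma \ref{lem-run}. In every other sub-case, the common value $p$ turns out to be a $\PV$-block-aligned position with respect to both runs: it is either a body-block boundary, the first position of a head that coincides with a body-block boundary of the other run, or the first position just after a body that coincides with a body-block boundary of the other run. From this shared alignment I can take a $\PV$-block supplied by the body of one run and use it to produce a one-character extension of the other, either to the right (contradicting right-maximality) or to the left (contradicting left-maximality).

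The main obstacle is a uniform treatment of the boundary situations, in particular when the head that should be enlarged already has the maximum length $|\PV| - 1$: then the natural one-character head extension is not an admissible head, and the contradicting occurrence must be produced instead by re-factorizing into an empty-head occurrence whose initial body block is precisely the aligned $\PV$-block supplied by the other run.
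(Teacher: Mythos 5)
The paper states this corollary without any written proof, presenting it as an immediate consequence of Lemma~\ref{lem-run}, so there is no official argument to compare yours against step by step; your plan of injecting the runs overlapping $i$ into the $|\PV|$ positions of the window $[i-|\PV|+1\pp i]$ is a reasonable way to try to make the deduction explicit. Unfortunately the map $\beta$ you define is not injective, so the proof fails. The collision occurs precisely in one of the mixed cases you dispose of by appeal to ``shared alignment'': $i$ lying in the tail of $R_1$ and in the head of $R_2$ with $e_1-t_1+1=b_2$, i.e., the body of $R_1$ ending immediately before $R_2$ begins. Concretely, take $\Sigma=\{\sa{a},\sa{b}\}$, $\PV=(1,1)$, $w=\sa{abbaaabab}$ and $i=4$. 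Then $R_1=(0,0,1,4)$ (empty head, body $\sa{ab}\cdot\sa{ba}$, tail $\sa{a}$) and $R_2=(4,1,0,8)$ (head $\sa{a}$, body $\sa{ab}\cdot\sa{ab}$, empty tail) are both abelian runs of period $\PV$: one checks that $w[0\pp 5]=\sa{abbaaa}$ and $w[3\pp 8]=\sa{aaabab}$ admit no factorization witnessing the abelian period $(1,1)$, so $R_1$ is right-maximal and $R_2$ is left-maximal, and both satisfy $(e-b-h-t+1)=4\geqslant 2|\PV|$. Yet $\beta(R_1)=e_1-t_1+1=4=b_2=\beta(R_2)$.

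The promised contradiction in this case --- a $\PV$-block of one body yielding a one-character extension of the other run --- does not materialize, and cannot in general. At such a junction no body block of either run straddles position $b_2$: the last block of $R_1$'s body ends at $b_2-1$ and the first block of $R_2$'s body starts at $b_2+h_2$, so the only material available around $i$ is the tail of $R_1$ and the head of $R_2$, each of Parikh vector strictly contained in $\PV$ but possibly saturating the same letter (here both are the single letter $\sa{a}$, which already exhausts $\PV[\sa{a}]=1$). Hence the left extension of $R_2$'s head by $w[b_2-1]$ and the right extension of $R_1$'s tail by $w[e_1+1]$ both overflow $\PV$, the empty-head re-factorization is unavailable because the candidate initial block $w[b_2-1\pp b_2+|\PV|-2]$ is not aligned with either body, and both runs are genuinely maximal. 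So injectivity of $\beta$ is simply false, and the pigeonhole step collapses. (The corollary itself is not threatened by the example --- two runs overlap $i=4$ and $|\PV|=2$ --- but your argument does not prove it; you would need a finer invariant that separates this abutting configuration, e.g.\ one based on the body-block boundaries nearest to $i$ rather than on head/tail starting positions.)
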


The next lemma shows that a left-maximal abelian substring at the right of another left-maximal
 abelian substring starting at position $i$ in a word $w$ cannot begin at a position smaller than $i$.
 
\begin{lemma}\label{lemma-onecandidate}
If $(b_1,h_1,0,e_1)$ and $(b_2,h_2,0,e_2)$ are two left-maximal occurrences of substrings with the same abelian period $\PV$ of a word $\s{v}$
 such that $e_1 < e_2$ and
 $b_1 > e_1-2\times |\PV|+1$ and $b_2 > e_2-2\times |\PV|+1$,
 then $b_1 \le b_2$.
\end{lemma}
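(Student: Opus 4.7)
The plan is to argue by contradiction. Suppose $b_1 > b_2$, so that $b_2 \le b_1 - 1 < b_1$. The goal is to exhibit an occurrence $(b_1 - 1, h', t', e_1)$ of a substring of $\s{v}$ that still has abelian period $\PV$; this would directly contradict the left-maximality of $(b_1, h_1, 0, e_1)$, and hence force $b_1 \le b_2$.

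To build this witness, I would exploit the block structure of the second occurrence. Because $(b_2, h_2, 0, e_2)$ has tail $0$, the substring $\s{v}[b_2 \pp e_2]$ decomposes as a head of length $h_2 < |\PV|$ followed by consecutive blocks of length $|\PV|$, each of Parikh vector exactly $\PV$, with block boundaries at the positions $b_2 + h_2 + j\,|\PV|$. Since $b_1 - 1$ lies in $[b_2, e_2]$, I would define $h'$ as the offset from $b_1-1$ to the first such block boundary at or after $b_1-1$; because consecutive boundaries are spaced by $|\PV|$, this yields $h' < |\PV|$. Starting from the aligned position $b_1 - 1 + h'$, consecutive full blocks inherited from the second occurrence fill the interval up to some last boundary at or before $e_1 + 1$, and what remains of length less than $|\PV|$ is taken as the tail $t'$. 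The new head and tail are suffixes, respectively prefixes, of blocks (or of the second occurrence's head), so their Parikh vectors are automatically contained in $\PV$.

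It then remains to argue that the factorization built this way is genuine in the sense of Definition~\ref{def-ap}, i.e.\ that it contains at least two full middle blocks ($k > 2$). This is exactly where the length hypotheses $b_1 > e_1 - 2 \times |\PV| + 1$ and $b_2 > e_2 - 2 \times |\PV| + 1$ intervene: they control how $[b_1 - 1, e_1]$ sits relative to the second occurrence's blocks, so that after the alignment of $h'$ there is enough room between $b_1 - 1 + h'$ and $e_1 - t' + 1$ to accommodate the required number of full blocks of $\PV$. Once this is established, the tuple $(b_1 - 1, h', t', e_1)$ is a valid occurrence of a substring with abelian period $\PV$, contradicting the left-maximality of $(b_1, h_1, 0, e_1)$.

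The hard part is the last step: verifying that after the alignment shift $h'$, the number of full middle blocks is indeed at least two. This depends on a small case analysis, according to whether $b_1 - 1$ falls inside the head of the second occurrence or inside one of its blocks, and on the precise interaction between the spacing $|\PV|$ and the interval lengths imposed by the hypothesis. Everything else (choosing $h'$, obtaining $t' < |\PV|$, checking the head/tail Parikh-vector containment) is essentially bookkeeping.
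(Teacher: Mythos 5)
Your plan has the role of the hypotheses exactly backwards, and this breaks the key step. The conditions $b_i > e_i - 2\times|\PV|+1$ do not guarantee ``enough room'' for middle blocks; they say the opposite, namely that each occurrence has length $e_i-b_i+1 < 2|\PV|$. Combined with tail length $0$ and head length $h_i<|\PV|$, this forces each occurrence to consist of its head followed by \emph{exactly one} full block, so $b_i+h_i=e_i-|\PV|+1$. Consequently your ``hard part'' --- showing that the interval $[b_1-1,e_1]$ contains at least two full middle blocks --- cannot be carried out. Worse, your construction does not even produce one: you align the new occurrence to the block boundaries of the second occurrence, but the unique block of $(b_2,h_2,0,e_2)$ is $\s{w}[e_2-|\PV|+1\pp e_2]$, which ends at $e_2>e_1$ and therefore is not contained in $[b_1-1,e_1]$. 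So the decomposition you build has no middle block at all and is not a witness of abelian period $\PV$.

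The correct witness keeps the first occurrence's own block and only lengthens its head. Assuming $b_2<b_1$, the identity $b_i+h_i=e_i-|\PV|+1$ together with $e_1<e_2$ gives $b_1+h_1<b_2+h_2$, so the interval $[b_1-1,\,b_1+h_1-1]$ is contained in the second occurrence's head $[b_2,\,b_2+h_2-1]$. Hence $\PV_{\s{w}[b_1-1\pp b_1+h_1-1]}\subset\PV_{\s{w}[b_2\pp b_2+h_2-1]}\subset\PV$, and $(b_1-1,h_1+1,0,e_1)$ --- extended head followed by the same single block $\s{w}[b_1+h_1\pp e_1]$ --- is an occurrence with abelian period $\PV$, contradicting the left-maximality of $(b_1,h_1,0,e_1)$. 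This is the argument the paper gives; the second occurrence is used only to certify that the one-letter extension of the first head still has Parikh vector contained in $\PV$, not to supply a new block decomposition.
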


\begin{proof}
If $b_2<b_1$ then since $e_2 > e_1$, $\s{w}[b_1\pp b_1+h_1-1]$ is a substring of $\s{w}[b_2\pp b_2+h_2-1]$.
Thus $\PV_{\s{w}[b_1\pp b_1+h_1-1]} \subset \PV_{\s{w}[b_2\pp b_2+h_2-1]} \subset \PV$ which implies
 that $\PV_{\s{w}[b_1-1\pp b_1+h_1-1]} \subset \PV$ meaning that $(b_1,h_1,0,e_1)$ is not left-maximal:
 a contradiction.
 \qed
 \end{proof}

We recall the following proposition, which shows that if we can extend the abelian period with the
 longest tail of a word $w$ when adding a symbol $a$, then we can extend all
 the other abelian periods with shorter tail.

\begin{proposition}[\cite{DAM}]
\label{prop-heap}
Suppose that a word $w$ has $s$ abelian periods $(h_1,p_1) < (h_2,p_2) < \cdots <(h_s,p_s)$
 such that $(|w|-h_i) \mod p_i=t>0$ for every $1\le i \le s$.
If for a letter $a\in\Sigma$, $(h_1,p_1)$ is an abelian period of $wa$,
 then $(h_2,p_2),\ldots,(h_s,p_s)$ are also abelian periods of $wa$. 
\end{proposition}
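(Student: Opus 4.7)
The plan is to exploit the fact that all the abelian periods $(h_i,p_i)$ in the chain share literally the same tail, so that appending the letter $a$ acts on a single common ``tail Parikh vector''. First I would verify this sharing: since $(|\s{w}|-h_i)\bmod p_i=t$ for every $i$, the tail of the factorization associated with $(h_i,p_i)$ is exactly the length-$t$ suffix $\s{w}[|\s{w}|-t\pp |\s{w}|-1]$, independently of $i$, and hence all these tails have the same Parikh vector $\QV$ with $|\QV|=t$. I would also observe that each $h_i$ is forced by $p_i$ via $h_i=(|\s{w}|-t)\bmod p_i$, so the periods in the chain are in bijection with their norms; in particular the norms $p_1,\ldots,p_s$ are pairwise distinct, and I order them so that $p_1<p_2<\cdots<p_s$.

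Next I would look at the last full block of each factorization: for $(h_i,p_i)$ it is the substring $\s{w}[|\s{w}|-t-p_i\pp |\s{w}|-t-1]$, whose Parikh vector I call $\PV_i$. All these blocks end at the same position $|\s{w}|-t-1$, so each is a proper suffix of the next, which yields the componentwise nesting $\PV_1[j]\le\PV_2[j]\le\cdots\le\PV_s[j]$ for every coordinate $j$.

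The crux is then to convert the hypothesis that $(h_1,p_1)$ is an abelian period of $\s{w}a$ into a single scalar inequality on the $a$-coordinate and propagate it along this chain. After appending $a$, the candidate tail for the $(h_1,p_1)$-factorization has length $t+1\le p_1$ and Parikh vector equal to $\QV$ with its $a$-coordinate increased by $1$. If $t+1<p_1$, proper containment in $\PV_1$ reduces to $\QV[a]+1\le\PV_1[a]$; if $t+1=p_1$ the candidate tail closes into a new full block and one needs equality $\QV[a]+1=\PV_1[a]$. In either case $\QV[a]+1\le\PV_1[a]$.

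Combined with the nesting, this yields $\QV[a]+1\le\PV_i[a]$ for every $i$. For $i>1$, the strict inequality $p_i>p_1\ge t+1$ puts us in the first case $t+1<p_i$, so the only condition for $(h_i,p_i)$ to remain an abelian period of $\s{w}a$ is that the new candidate tail be strictly contained in $\PV_i$; this follows from $\QV\subset\PV_i$ (already true in $\s{w}$) together with $\QV[a]+1\le\PV_i[a]$. The main delicacy I anticipate is justifying cleanly the ordering convention and the nesting $\PV_1\subsetneq\cdots\subsetneq\PV_s$; once that is in place the letter-count propagation is an immediate chain of inequalities.
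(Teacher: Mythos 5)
The paper does not actually prove this proposition: it is recalled from the cited reference \cite{DAM} and used as a black box, so there is no in-paper argument to compare yours against. Judged on its own, your proof is correct and complete. The two observations that drive it are exactly right: (i) the common tail length $t$ forces all $s$ factorizations to share the same tail suffix, and forces $h_i=(|\s{w}|-t)\bmod p_i$, so the periods are determined by their norms; (ii) the last full blocks all end at position $|\s{w}|-t-1$, so ordering by norm nests their Parikh vectors, and the whole question of extending by $a$ collapses to the single inequality $\QV[a]+1\le\PV_i[a]$ on the $a$-coordinate, which propagates up the nested chain. Your case split at $t+1=p_1$ versus $t+1<p_1$ is also handled correctly (in the former case the tail closes into a new block and equality gives the needed inequality a fortiori), and your observation that $p_i>p_1\ge t+1$ keeps every $i>1$ in the ``proper containment'' regime is the right way to dispose of the norm condition. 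The one interpretive point you flag --- taking the ordering $(h_1,p_1)<\cdots<(h_s,p_s)$ to mean $p_1<\cdots<p_s$ --- is the correct reading: the statement is false if $(h_1,p_1)$ is not the period of smallest norm (a larger block can absorb the extra $a$ while a smaller nested one cannot), and ordering by norm is the convention of the source; since the $h_i$ are functions of the $p_i$ here, nothing else is lost. So the ``delicacy'' you anticipate is not a gap, just a convention to state explicitly.
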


We want to give an algorithm that, given a string $\s{w}$ and a Parikh vector
 $\PV$, returns all the abelian runs of $\s{w}$ having abelian period $\PV$.

\section{Previous Work}\label{sect-prev}

In~\cite{matsuda2014computing}, the authors presented an algorithm that computes all the abelian runs
 of a string $w$ of length $n$ in $O(n^2)$ time and space complexity.
They consider that a substring $w[i-h\pp j+t]$ is an abelian run if
  $w[i\pp j]$ is a maximal abelian repetition with period length $p$ and
  $h,t \ge 0$ are the largest integers satisfying
  $\PV_{w[i-h\pp i-1]}\subset \PV_{w[i\pp i+p-1]}$ and
  $\PV_{w[j+1\pp j+t]}\subset \PV_{w[i\pp i+p-1]}$.
Their algorithm works as follows.
First, it computes all the abelian squares using the algorithm of~\cite{Cummings_weakrepetitions}.
For each $0 \le i \le n-1$, it computes a set $L_i$ of integers such that
\[
L_i=\{j \mid \PV_{w[i-j\pp i]} = \PV_{w[i+1\pp i+j+1]}, 0 \le j \le \min\{i+1,n-i\}\}.
\] 
The $L_i$'s are stored in a two-dimensional boolean array $L$ of size $\lfloor n/2 \rfloor \times (n-1)$:
 $L[j,i]=1$ if $j\in L_i$ and $L[j,i]=0$ otherwise.
An example of array $L$ is given in Figure~\ref{figu-jap}.
All entries in $L$ are initially unmarked.
Then, for each $1\le j \le \lfloor n/2 \rfloor \times$ all maximal abelian repetitions of period length $j$ are computed in $O(n)$.
The $j$-th row of $L$ is scanned in increasing order of the column index.
When an unmarked entry $L[j,i]=1$ is found then the largest non-negative integer $k$
 such that $L[j,i+pj+1]=1$, for $1\le p \le k$, is computed.
This gives a maximal abelian repetition with period length $j$ starting at position $i-j+1$
 and ending at position $i+(k+1)j$.
Meanwhile all entries $L[j,i+pj+1]$, for $-1\le p \le k$, are marked.
Thus all abelian repetitions are computed in $O(n^2)$ time.
It remains to compute the length of their heads and tails.
This cannot be done naively otherwise it would lead to a 
 $O(n^3)$ time complexity overall.
Instead, for each $0\le i \le n-1$, let $T_i$ be the set of positive integers such that
 for each $j\in T_i$ there exists a maximal abelian repetition of period $j$ and starting
  at position $i-j+1$.
Elements of $T_i$ are processed in increasing order.
Let $j_k$ denote the $k$-th smallest element of $T_i$.
Let $h_k$ denote the length of the head of the abelian run computed from the abelian repetition of period $j_k$.
Then $h_{k}$ can be computed from $h_{k-1}$, $j_{k-1}$ and $j_{k}$
 as follows.
Two cases can arise:
\begin{enumerate}
\item
If $k=0$ or $j_{k-1}+h_{k-1}\le j_k$, then $h_k$ can be computed by comparing the
 Parikh vector $\PV_{w[i-j_k-p\pp i-j_k]}$ for increasing values of $\PV$ from $0$ up to
 $h_k+1$, with the Parikh vector $\PV_{w[i-j_k+1\pp i]}$.
\item
If $j_{k-1}+h_{k-1} > j_k$, then\\
 $\PV_{w[i-j_{k-1}-h_k\pp i-j_k]}$ can be computed from
 $\PV_{w[i-j_{k-1}-h_{k-1}+1\pp i-j_{k-1}]}$. Then, $h_k$ is computed by comparing the Parikh vector
 $\PV_{w[i-j_{k-1}-h_{k-1}+1-p\pp i-j_k]}$ for increasing values of $p$ from $0$ up to
 $h_k+j_k-h_{k-1}-j_{k-1}+1$.
\end{enumerate}
This can be done in $O(n)$ time.
The lengths of the tails can be computed similarly.
Overall, all the runs can be computed in time and space $O(n^2)$.

\begin{figure}
\begin{center}

\begin{tabularx}{.7\textwidth}{|X||X|X|X|X|X|X|X|X|X|X|X|X|}
\hline
&\sa{a}&\sa{b}&\sa{a}&\sa{a}&\sa{b}&\sa{a}&\sa{b}&\sa{a}&\sa{a}&\sa{b}&\sa{b}&\sa{b}\\
\hline
&0&1&2&3&4&5&6&7&8&9&10&11\\
\hline\hline
1&0&0&1&0&0&0&0&1&0&1&1&0\\
\hline
2&0&0&0&0&1&1&0&1&0&0&0&0\\
\hline
3&0&0&1&0&1&1&0&0&0&0&0&0\\
\hline
4&0&0&0&0&0&0&1&0&0&0&0&0\\
\hline
5&0&0&0&0&1&0&0&0&0&0&0&0\\
\hline
6&0&0&0&0&0&0&0&0&0&0&0&0\\
\hline
\end{tabularx}
\end{center}
\caption{\label{figu-jap}
An example of array $L$ for $w=\sa{abaababaabbb}$.
$L_{4,6}=1$ which means that $\PV_{w[3\pp 6]}=\PV_{w[7\pp 10]}$.
}
\end{figure}

This previous method works offline: it needs to know the whole string before
 reporting any abelian run.
We will now give what we call an online method meaning that we will be able
 to report the abelian runs ending at position $i-1$ of a string $w$
 when processing position $i$.
 However, this method is restricted to a given Parikh vector.
 
\section{A Method for Computing Abelian Runs of a Word with a Given Parikh Vector}

\subsection{Algorithm}

Positions of $w$ are processed in increasing order.
Assume that when processing position $i$ we know all the, at most $|\PV|$, abelian substrings ending at position $i-1$.
At each position $i$ we checked if $\PV_{w[i-|\PV|+1\pp i]} = \PV$ then all abelian substrings ending at position $i-1$
 can be extended and thus become abelian substrings ending at position $i$.
Otherwise, if $\PV_{w[i-|\PV|+1\pp i]} \ne \PV$ then abelian substrings ending at positions $i-1$ are processed
 in decreasing order of tail length.
When an abelian substring cannot be extended it is considered as an abelian run candidate.
As soon as an abelian substring ending at position $i-1$ can be extended then all the others (with smaller tail length)
 can be extended: they all become abelian substrings ending at position $i$.
At most one candidate (with the smallest starting position) can be output at each position.

\subsection{Implementation}

The algorithm \textsc{Runs}$(\PV,w)$ given below computes all the
 abelian runs with Parikh vector $\PV$ in the word $w$.
 It uses:
 \begin{itemize}
 \item function \textsc{Find}$(\PV,w)$, which returns the ending position of the first occurrence
  of Parikh vector $\PV$ in $w$ or $|w|+1$ if such an occurrence does not exist;
 \item function \textsc{FindHead}$(w,i,\PV)$, which returns the leftmost position $j<i$ such that
  $\PV_{w[j\pp i-1]} \subset \PV$ or $i$ is such a substring does not exist;
 \item function \textsc{Min}$(B)$ that returns the smallest element of the integer array $B$.
 \end{itemize}

Positions of $w$ are processed in increasing order (Lines~\ref{algo-loop1a}--\ref{algo-loop1b}).
We will now describe the situation when processing position $i$ of $w$:
\begin{itemize}
\item
array $B$ stores the starting positions of abelian substrings ending at position $i-1$ for the different $|\PV|$
 tail lengths ($B$ is considered as a circular array);
\item
$t_0$ is the index in $B$ of the possible abelian substring with a tail of length $0$ ending at position $i$.
\end{itemize}

All the values of the array $B$ are initially set to $|w|$.
Then, when processing position $i$ of $w$, for $0\le k < |\PV|$ and $k\ne t_0$,
 if $B[k]=b<|w|$ then
 $w[b\pp i-1]$ is an abelian substring with Parikh vector $\PV$ with tail length
 $((t_0-k+|\PV|)\bmod |\PV|)-1$.
Otherwise, if $B[k]=|w|$ then it means that there is no abelian substring in $w$
 ending at position $i-1$ with tail length $((t_0-k+|\PV|)\bmod |\PV|)-1$.

The algorithm \textsc{Runs}$(\PV,w)$ uses two other functions:
 \begin{itemize}
 \item function \textsc{GetTail}$(tail,t_0,p)$, which returns $(t_0-tail+p)\bmod p$ which is the length of the tail
  for the abelian substring ending at position $i-1$ and starting in $B[tail]$;
 \item function \textsc{GetRun}$(B,tail,t_0,e,p)$, which returns the abelian substring\\ $(B[tail],h,t,e)$.
 \end{itemize}

If $\PV_{w[i-|\PV|+1\pp i]} = \PV$ (Line~\ref{algo-cas1a}) then all abelian substrings ending at position
 $i-1$ can be extended (see \figurename~\ref{figu-cas1}).
Either this occurrence does not extend a previous occurrence at position $i-|\PV|$
 (Line~\ref{algo-cas1b}): the starting position
 has to be stored in $B$ (Line~\ref{algo-cas1c}) or
 this occurrence extends a previous occurrence at position $i-|\PV|$ and the starting position
 is already stored in the array $B$.

\begin{figure}

\begin{tikzpicture}[scale=0.8]

\draw (0,0) rectangle (2,.5);
\draw (2,0) rectangle (4,.5);
\draw (4,0) rectangle (7,.5);
\draw (7,0) rectangle (10,.5);
\draw (10,0) rectangle (12.5,.5);
\draw (12.5,0) rectangle (13,.5);
\draw (13,0) rectangle (15,.5);

\node at (12.75,.75) {$i$};

\draw[<->] (4,-.5) -- (7,-.5) ;
\node at (5.5,-1) {$|\PV|$};

\draw[<->] (7,-.5) -- (10,-.5) ;
\node at (8.5,-1) {$|\PV|$};

\draw[<->] (10,-.5) -- (13,-.5) ;
\node at (11.5,-1) {$|\PV|$};

\draw (11,-1.5) rectangle (13,-2);
\node at (12.75,-2.25) {$\cdots$};
\draw (12,-2.5) rectangle (13,-3);

\end{tikzpicture} 
\caption{\label{figu-cas1}
If $\PV_{w[i-|\PV|+1\pp i]}=\PV$ then $\PV_{w[j\pp i]}\subset\PV$ for
 $i-|\PV|+1 < j < i$.
}
\end{figure}

If $\PV_{w[i-|\PV|+1\pp i]} \ne \PV$ (Lines~\ref{algo-cas2a}-\ref{algo-loop1b})   
 then abelian substrings ending at position $i-1$ are processed
 in decreasing order of tail length.
To do that, the circular array $B$ is processed in increasing order
  of index starting from $t_0$ (Lines~\ref{algo-cas2b}-\ref{algo-cas2i}).
  
Let $tail$ be the current index in array $B$.
At first, $tail$ is set to $t_0$ (Line~\ref{algo-cas2aa}).
In this case there is no need to check if there is an abelian substring with tail length 0
 ending at position $i$ (since it has been detected in Line~\ref{algo-cas1a})
 and thus $(B[t_0], h, |\PV|-1, i-1)$ is considered as an abelian substring candidate
  (Line~\ref{algo-cas2e})
 and array $B$ is updated (Line~\ref{algo-cas2f}) since $(B[t_0],h,0,i)$ is not an abelian substring.

When $tail\ne t_0$,
let $t=\,$\textsc{getTail}$(tail,t_0,|\PV|)$.
If $\PV_{w[i-t+1\pp i]}\not\subset \PV$
 and thus $(B[tail], h, t, i-1)$ is considered as an abelian substring candidate (Line~\ref{algo-cas2e})
 and array $B$ is updated (Line~\ref{algo-cas2f}) since $(B[tail],h,t+1,i)$ is not an abelian substring.
If $\PV_{w[i-t+1\pp i]}\subset \PV$ then,
 for $tail \le k \le (t_0-1+|\PV|)\bmod |\PV|$,
 $\exists\, h'_k,t'_k$ such that $(B[k], h'_k, t'_k, i)$ is an abelian substring.
It comes directly from Prop.~\ref{prop-heap}.

At each iteration of the loop in Lines~\ref{algo-cas2b}-\ref{algo-cas2i}
 $b$ is either equal to $|w|$ or to the position of the leftmost abelian run
 ending at position $i-1$.
Thus a new candidate is found if its starting position is smaller than $b$
 (Lines~\ref{algo-cas2dbis}-\ref{algo-cas2e}).
It comes directly from Lemma~\ref{lemma-onecandidate}.

\begin{algorithm}
  \caption{\textsc{GetTail}$(tail,t_0,p)$}
  \Return{$(t_0-tail+p)\bmod p$}
\end{algorithm}

\begin{algorithm}
  \DontPrintSemicolon
  \SetAlgoNoLine
  \caption{\textsc{GetRun}$(B,tail,t_0,e,p)$}
  $b\leftarrow B[tail]$\;
  \uIf{$tail = t_0$}{
    $t\leftarrow p-1$\;
  }{
    $t\leftarrow\textsc{GetTail}(tail,t_0,p)-1$\;
  }
  $h\leftarrow(e-t-b+1)\bmod p$\;
  \Return{$(b,h,t,e)$}
\end{algorithm}

\begin{algorithm}
  \DontPrintSemicolon
  \SetAlgoNoLine
  \caption{\textsc{Runs}$(\PV,w)$}
  $j\leftarrow\textsc{Find}(\PV,w)$\;
  $(B,t_0)\leftarrow(\undefined^{|\PV|},0)$\;
  $B[t_0]\leftarrow\textsc{FindHead}(w,j-|\PV|+1,\PV)$\;
  \label{P1}
  \For{$i\leftarrow j+1$ to $|w|$}{
  \label{algo-loop1a}
    $t_0\leftarrow(t_0+1)\bmod |\PV|$\;
    \eIf{$i<|w|\textbf{ and }\PV_{w[i-|\PV|+1\pp i]}=\PV$}{
    \label{algo-cas1a}
       \uIf{$B[t_0]=\undefined$}{
      \label{algo-cas1b}
        $B[t_0]\leftarrow\textsc{FindHead}(w,i-|\PV|+1,\PV)$
        \label{algo-cas1c}
      }
    }{
     \label{algo-cas2a}
     $(b,tail)\leftarrow(|w|,t_0)$\;
     \label{algo-cas2aa}
     \Repeat{$tail = t_0$}{
     \label{algo-cas2b}
        \label{algo-cas2c}\If{$B[tail]\neq\undefined$}{
          \If{$tail=t_0\textbf{ or }i=|w|\textbf{ or }\PV_{w[i-\textsc{GetTail}(tail,t_0,|\PV|)+1\pp i]}\not\subset \PV$}{
          \label{algo-cas2d}
               \uIf{$B[tail]\leqslant b$}{
          \label{algo-cas2dbis}
            $(b,h,t,e)\leftarrow\textsc{GetRun}(B,tail,t_0,i-1,|\PV|)$
            \label{algo-cas2e}
               }
            $B[tail]\leftarrow\undefined$
            \label{algo-cas2f}
          }\lElse{
	    \label{algo-cas2g}\textbf{break}
           }
        }
      \label{algo-cas2h}$tail\leftarrow(tail+1)\bmod |\PV|$\;
      }
      \label{algo-cas2i}
      \label{algo-cas2j}\If{$\textsc{min}(B) > b \textbf{ and } e-t-h-b+1 > |\PV|$}{
        $\textsc{Output}(b,h,t,e)$
  \label{algo-loop1b}
      }
   }
}
\end{algorithm}

\subsection*{Example}

Let us see the behaviour of the algorithm on
$\Sigma = \{\sa{a},\sa{b}\}$, $w=\sa{abaababaabbb}$ and $\PV=(2,2)$:

$j=4, B = (12,12,12,12), t_0 = 0$

$B[0] = 0, B = (0, 12, 12, 12)$

$i=5$

\hspace{1cm}$t_0=1$

\hspace{1cm}$\PV_{w[2\pp 5]}\ne \PV$

\hspace{1cm}$(b,tail)=(12,1)$

\hspace{1cm}$tail = 3, 2, 1, 0$

$i=6$

\hspace{1cm}$t_0=2$

\hspace{1cm}$\PV_{w[3\pp 6]} = \PV$

\hspace{1cm}$B[2]=0, B=(0,12,0,12)$

$i=7$

\hspace{1cm}$t_0=3$

\hspace{1cm}$\PV_{w[4\pp 7]} = \PV$

\hspace{1cm}$B[3]=1, B=(0,12,0,1)$

$i=8$

\hspace{1cm}$t_0=0$

\hspace{1cm}$\PV_{w[5\pp 8]}\ne \PV$

\hspace{1cm}$(b,tail)=(12,0)$

\hspace{1cm}$(b,h,t,e) = (0,1,3,7)$

\hspace{1cm}$B[0]=12, B=(12,12,0,1)$

\hspace{1cm}$tail=1$

\hspace{1cm}$tail=2$

$i=9$

\hspace{1cm}$t_0=1$

\hspace{1cm}$\PV_{w[6\pp 9]} = \PV$

\hspace{1cm}$B[1]=3, B=(12,3,0,1)$

$i=10$

\hspace{1cm}$t_0=2$

\hspace{1cm}$\PV_{w[7\pp 10]} = \PV$

\hspace{1cm}$B[2]\ne 12$

$i=11$

\hspace{1cm}$t_0=3$

\hspace{1cm}$\PV_{w[8\pp 11]}\ne \PV$

\hspace{1cm}$(b,tail)=(12,3)$

\hspace{1cm}$(b,h,t,e) = (1,3,3,10)$

\hspace{1cm}$B[3]=12, B=(12,3,0,12)$

\hspace{1cm}$tail=0$

\hspace{1cm}$tail=1$

$i=12$

\hspace{1cm}$t_0=0$

\hspace{1cm}$i \ge 12$

\hspace{1cm}$(b,tail)=(12,0)$

\hspace{1cm}$tail=1$

\hspace{1cm}$(b,h,t,e) = (3,3,2,11)$

\hspace{1cm}$B[1]=12, B=(12,12,0,12)$

\hspace{1cm}$tail=2$

\hspace{1cm}$(b,h,t,e) = (0,3,1,11)$

\hspace{1cm}$B[1]=12, B=(12,12,12,12)$

\hspace{1cm}$tail=3$

\hspace{1cm}$tail=0$

\hspace{1cm}\textsc{Output}$((0,3,1,11)$

\subsection{Correctness and Complexity}

\begin{theorem}
The algorithm \textsc{Run}{$(\PV,w)$} computes all the abelian runs with Pa\-rikh vector $\PV$ in a string
 $w$ of length $n$ in time $O(n\times |\PV|)$ and additional space $O(\sigma+|\PV|)$.
\end{theorem}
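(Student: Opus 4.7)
The plan is to separate correctness from complexity. For correctness, I will proceed by induction on the outer loop index $i$, maintaining the following invariant: after the update $t_0\leftarrow(t_0+1)\bmod|\PV|$ at the start of iteration $i$, for each cyclic offset $k\in\{0,\ldots,|\PV|-1\}$, either $B[k]=|w|$, or $B[k]$ is the starting position of the (necessarily leftmost, by Lemma \ref{lemma-onecandidate}) left-maximal abelian substring ending at position $i-1$ with abelian period $\PV$ and tail length $((t_0-k+|\PV|)\bmod|\PV|)-1$, with $B[t_0]$ reserved for a tail-$0$ substring ending at the current position $i$. The base case follows from the initial call to \textsc{Find}, which locates the earliest window of Parikh vector $\PV$, followed by \textsc{FindHead}, which extends the head leftward as far as possible without breaking containment in $\PV$.

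The inductive step splits into the two branches of the main conditional. If $\PV_{w[i-|\PV|+1\pp i]}=\PV$, the last $|\PV|$ symbols form a full block of Parikh vector $\PV$, so every substring recorded in $B$ extends by one symbol and its tail length increases by one modulo $|\PV|$; the only new left-maximal substring possibly created has tail $0$ and is installed in $B[t_0]$ via \textsc{FindHead} unless an equivalent entry is already live. If $\PV_{w[i-|\PV|+1\pp i]}\neq\PV$, the Repeat-until loop scans $B$ in strictly decreasing order of tail length, and by Proposition \ref{prop-heap} as soon as one entry succeeds in extending (the \textbf{break} at Line \ref{algo-cas2g}) all entries of smaller tail length also extend and survive in $B$, while those processed before \textbf{break} failed to extend and are emitted as run candidates. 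Combining Lemmas \ref{lem-run} and \ref{lemma-onecandidate}, at most one such candidate per iteration corresponds to an actual abelian run, and the guard $\textsc{min}(B)>b\text{ and }e-t-h-b+1>|\PV|$ at Line \ref{algo-cas2j} isolates exactly that one (left-maximality by the minimum test, and length $\ge 2|\PV|$ by the inequality).

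For complexity, the outer loop performs $O(n)$ iterations. Both the equality test $\PV_{w[i-|\PV|+1\pp i]}=\PV$ and the containment tests $\PV_{w[i-t+1\pp i]}\subset\PV$ are implemented via sliding windows, maintaining a length-$\sigma$ count array plus a single integer counting the number of letters violating equality or containment; each character insertion or deletion updates both in $O(1)$. The Repeat-until loop performs at most $|\PV|$ iterations per outer round, and since $t=\textsc{GetTail}(tail,t_0,|\PV|)$ decreases as $tail$ advances, the same sliding-window data structure can be shrunk incrementally through the loop at $O(1)$ per iteration. The calls to \textsc{FindHead} are amortized against the entries of $B$ they populate (each such entry is later consumed in $O(1)$). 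This yields $O(n\times|\PV|)$ total time; $B$ uses $O(|\PV|)$ words and the window counters use $O(\sigma)$, for additional space $O(\sigma+|\PV|)$.

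The main obstacle will be the careful verification of the invariant on $B$ under the circular indexing, in particular the dual role of $B[t_0]$ and the need to certify that \textsc{FindHead} never overwrites a live entry. A closely related subtlety is to show that the two branches of the main conditional are exhaustive and mutually exclusive for the purpose of maintaining the invariant, and that the hypotheses of Proposition \ref{prop-heap} are met precisely in the form in which they are used inside the Repeat-until loop; once these points are settled, Lemmas \ref{lem-run} and \ref{lemma-onecandidate} and the output guard combine to finish the correctness argument.
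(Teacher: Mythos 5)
Your proposal is correct and follows essentially the same route as the paper's own (much terser) argument: correctness via Corollary~\ref{coro-run}, Lemma~\ref{lemma-onecandidate} and Proposition~\ref{prop-heap}, and the $O(n\times|\PV|)$ bound from the $n$ outer iterations, the at most $|\PV|$ inner iterations, and a sliding-window Parikh vector with a difference counter giving $O(\sigma+|\PV|)$ extra space. The explicit loop invariant on $B$ and the amortization of the \textsc{FindHead} calls are elaborations the paper leaves implicit, but they do not change the approach.
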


\begin{proof}
The correctness of the algorithm comes from Corollary~\ref{coro-run},
 Lemma~\ref{lemma-onecandidate}
 and Prop.~\ref{prop-heap}.
The loop in lines~\ref{algo-loop1a}-\ref{algo-loop1b} iterates at most $n$ times. 
The loop in lines~\ref{algo-cas2b}-\ref{algo-cas2i} iterates at most $|\PV|$ times. 
The instructions in lines~\ref{algo-cas1a}, \ref{algo-cas1c} and \ref{algo-cas2d} regarding the comparison of Parikh vectors can be performed in $O(n)$ time overall, independently from the alphabet size,
 by maintaining the Parikh vector of a sliding window of length $|\PV|$ on $w$ and
 a counter $r$ of the number of differences between this Parikh vector and $\PV$.
 At each sliding step, from $w[i-|\PV|\pp i-1]$ to $w[i-|\PV|+1\pp i]$ the counters of the 
 characters $w[i-|\PV|]$ and $w[i]$ are updated, compared to their counterpart
  in $\PV$ and $r$ is updated accordingly.
The additional space comes from the Parikh vector and from the array $B$, which has $|\PV|$ elements.
\qed
\end{proof}

\section{Conclusions}

We gave an algorithm that, given a word $w$ of length $n$ and a Parikh vector $\PV$, returns all the abelian runs of period $\PV$ in $w$ in time $O(n\times |\PV|)$ and space $O(\sigma~+~|\PV|)$. The algorithm works in an online manner. To the best of our knowledge, this is the first algorithm solving the problem of searching for all the abelian runs having a given period.

We believe that further combinatorial results on the structure of the abelian runs in a word could lead to new algorithms.

One of the reviewers of this submission pointed out that our algorithm can be modified in order to achieve time complexity $O(n)$. Due to the limited time we had for preparing the final version of this paper, we did not include such improvement here. We will provide the details in a forthcoming full version of the paper. By the way, we warmly thank the reviewer for his comments.

\bibliographystyle{splncs03}         
\bibliography{main}

\end{document}